\documentclass[letterpaper, 10 pt, conference]{ieeeconf}
\usepackage{amssymb}
\usepackage{amsmath}
\usepackage{graphicx}

\let\ieeeproof\proof
\let\endieeeproof\endproof
\let\proof\relax
\let\endproof\relax

\usepackage{amsthm}

\let\proof\ieeeproof
\let\endproof\endieeeproof

\newtheorem{definition}{Definition}
\usepackage{subcaption}

\newtheorem{theorem}{Theorem}

\usepackage{dblfloatfix}
\usepackage{cite}
\usepackage{color}

\def\bs{\boldsymbol}
\def\mcal{\mathcal}
\def\mbb{\mathbb}
\def\mcx{\mcal{X}}
\def\mcy{\mcal{Y}}
\def\mcm{\mcal{M}}

\def\mcn{\mcal{N}}
\def\mcf{\mcal{F}}

\IEEEoverridecommandlockouts                   
\title{\LARGE \bf
Networked Multi-Resource Defense Capabilities in a General Lotto Game 
}

\author{Faezeh Shojaeighadikolaei, Keith Paarporn
\thanks{ F. Shojaeighadikolaei and K. Paarporn are with the Department of Computer Science, University of Colorado Colorado Springs. This work is supported by NSF grants \#2346791 and \#2541011. Contact: \texttt{ \{fshojaei,kpaarpor\}@uccs.edu}.
}
}

\begin{document}

\maketitle
\thispagestyle{empty}
\pagestyle{empty}

\begin{abstract}

Ensuring the security of complex systems involves the strategic allocation of defensive resources to prevent various types of attacks from succeeding. A defender often has multiple types of defensive assets at its disposal, where it must decide how to optimally deploy their heterogeneous capabilities across different attack types. In this paper, we formulate a multi-resource allocation problem in the form of a General Lotto game where a defender possesses various types of resources. A feature that we introduce is that their individual effectiveness against different types of attacks is characterized by a network weight matrix. In our analysis, we derive upper and lower bounds on the performance of the defender, and provide numerical evidence suggesting that they are tight. For the case of two attack types, we analytically prove that the bounds coincide, establishing an exact equilibrium characterization. We then numerically compare our proposed networked multi-resource architecture to an independent-defense benchmark from the existing literature. These results highlight fundamental and tractable structures underlying multi-attack-type defense problems.

\end{abstract}

\section{Introduction}
A common problem faced in security is how to distribute limited defensive resources against several types of threats. Applications range from cybersecurity, to critical infrastructure protection, to military operations, where a defender needs to apportion resources to protect a system against multiple types of attacks while considering the actions of strategic attackers \cite{11164058}. These problems become particularly natural when the attacker also faces limited resources to allocate across different attack types.
Analyzing how to optimally allocate resources is therefore important for informing security policies \cite{Shao2021Optimal,Yan2024Game-theoretical,Min2018Defense,Yan2024Game,tambe2011security,Gorbachuk2022GAME,Wang2026A}.

A crucial aspect of security policies is how to effectively leverage multiple types of defensive assets with heterogeneous qualities and costs. For example, defensive resources may serve distinct preventive and reactive roles in protecting a system against cyber attacks \cite{Shojaeighadikolaei}. Similarly, how to deploy distributed detection components in tandem with controllers for cyber--physical systems leads to nuanced design considerations \cite{BEHNAM2026110809,Atkinson2023Resource}.
More broadly, resource allocation under uncertainty also arises in critical infrastructure and energy systems, including hydropower scheduling and renewable energy integration \cite{Khojaste2025}.
Additionally, firms and health care organizations must invest in cybersecurity resources, including technical tools and employee training, to address threats such as data breaches and ransomware attacks \cite{Bidoki2026}. In cybersecurity, defenders may face distinct attack types, including phishing and other social-engineering attacks, which motivate a range of defensive mechanisms \cite{Pritom2026}.

In this paper, we investigate a game-theoretic model of strategic resource allocation, where a defender is tasked with protecting a system against multiple types of attacks by deploying several types of security resources. Each defensive resource type's effectiveness against the different attack types is heterogeneous, and these relationships can be described according to a networked weights matrix.
We consider the scenario where the attacker has dedicated, specialized resource types to employ against each attack type.
Hence, the attacker's resources are specialized by attack type, whereas the defender's resources may be effective against multiple attack types through the networked effectiveness structure. We model this interaction as a General Lotto-type game which captures the elements of strategic resource allocation and weakest-link security interdependence.

Resource allocation games such as the Colonel Blotto and General Lotto game have been used extensively to study how adversaries should allocate resources when facing multiple targets \cite{kovenock2021generalizations}. Extensions have studied heterogeneous values across targets, asymmetric budgets, and non-zero sum variations of the game \cite{kovenock2021generalizations,Aghajan2024Extension,Sonin2024Blotto}. Several of these works have also considered extensions with networked battles \cite{Guan2020Colonel,Guan2018Colonel,Aghajan2023The,Erat2024Colonel}. However, to the best of our knowledge, none of these models allow for flexible routing of defender resources across attack types. Indeed, our study extends existing work that considers multiple resource types in which all of them are specialized~\cite{aghajan2023multiresource,paarporn2025allocationheterogeneousresourcesgeneral}. In particular, these specialized-resource models correspond to a diagonal effectiveness structure, whereas our formulation allows defensive resource types to be effective against multiple attack types.

Our contributions in this paper are as follows. First, we introduce a networked multi-resource weakest-link General Lotto model that simultaneously accounts for heterogeneous defensive resource capabilities and flexible routing across attack types; see Section~\ref{sec:model} for details. Second, we establish upper and lower bounds on the defender's min-max and max-min values, respectively, for this formulation; these results, along with supporting numerical evidence suggesting that the bounds are tight, are presented in Section~\ref{sec:result}. Third, we show that in the two-attack-type case these bounds admit a closed-form expression and coincide, thereby exactly characterizing the equilibrium value in this specific setting; this is presented in Section~\ref{sec:two-nodes}. We then present numerical comparisons between these solutions and previous work with specialized resource heterogeneity.


\section{System Model}
\label{sec:model}

\subsection{Classical Weakest-Link Resource Allocation Game}

Here, we first formulate a model we refer to as the \emph{classical weakest-link game}.
There are two strategic players: the defender $\mcx$ and the attacker $\mcy$.
They compete over a system with $n$ nodes labeled $\mathcal{N} = \{1, \ldots, n \}$.

Both players allocate resources across the nodes. Let $x=(x_1,\ldots,x_n)\in\mathbb{R}_+^n$ denote a defense allocation vector, where $x_i$ represents the defensive effort expended on node $i$. Similarly, let $y=(y_1,\ldots,y_n)\in\mathbb{R}_+^n$ denote an attack allocation vector, where $y_i$ represents the attack effort directed toward node $i$. 
Given allocation vectors $x,y \in \mathbb{R}_+^n$, the payoff to the defender is defined as
\begin{equation}
\pi_\mcx(x,y)=
\mathbf{1}\{x_i \ge y_i \ \forall i \in \mcn \}.
\end{equation}
In words, the defender successfully defends the system if and only if it has a higher allocation than the attacker \emph{on every node}. This is known as the \emph{weakest-link objective}. Otherwise, the attacker successfully compromises the system. The attacker's payoff is given by $\pi_Y(x,y)=1-\pi_X(x,y)$.

In this formulation, the players are able to randomize their allocations in a way such that their total expected allocations satisfy their fixed resource budgets, $X,Y > 0$.
That is, the defender chooses a distribution $F_\mcx$ over $\mathbb{R}_+^n$ belonging to the set
\begin{equation}\label{eq:classic_feasible}
\mcf(X) =
\left\{
F_\mcx \in \Delta(\mathbb{R}_+^n)
:
\mathbb{E}_{x \sim F_\mcx}\!\left[\sum_{i=1}^n x_i\right] = X
\right\}.
\end{equation}
where $\Delta(S)$ denotes the set of all cumulative probability distributions over a subset $S$. The attacker analogously chooses a probability distribution $F_Y \in \mcf(Y)$.
Given mixed strategies $(F_X,F_Y)$, the expected payoff of the defender is
\begin{equation}
\pi_X(F_X,F_Y)
=
\mathbb{E}_{x\sim F_X,\;y\sim F_Y}
\big[
\mathbf{1}\{x_i \ge y_i \ \forall i\}
\big].
\end{equation}
These elements define a strategic-form game that we refer to as the \emph{classic weakest-link game}, and denote an instance of the game with $\text{WL}(X,Y,n)$.

\subsection*{Performance metrics and solution concept}

Throughout this paper, we focus on two quantities of interest regarding the performance of player $\mcx$: the \emph{max-min} and \emph{min-max} values,
\begin{equation}\label{eq:maxmin_minmax}
	\begin{aligned}
		&\max_{F_\mcx\in\mcf(X)} \min_{F_\mcy\in\mcf(Y)} \pi_\mcx(F_\mcx,F_\mcy) \\
		&\min_{F_\mcy\in\mcf(Y)} \max_{F_\mcx\in\mcf(X)}  \pi_\mcx(F_\mcx,F_\mcy).
	\end{aligned}
\end{equation}
The max-min value is the highest payoff that the defender can guarantee for itself, regardless of how the attacker plays.
The min-max value is the lowest payoff for player $\mcx$ that player $\mcy$ can guarantee, regardless of how the defender plays.
It always holds that the max-min value is no greater than the min-max value,
\begin{equation}\label{eq:SV_inequality}
	\begin{aligned}
		&\max_{F_\mcx\in\mcf(X)} \min_{F_\mcy\in\mcf(Y)} \pi_\mcx(F_\mcx,F_\mcy)  \\
		&\quad\quad \leq \min_{F_\mcy\in\mcf(Y)} \max_{F_\mcx\in\mcf(X)}  \pi_\mcx(F_\mcx,F_\mcy).
	\end{aligned}
\end{equation}
If it is the case that they are equivalent, then their common value is referred to as the \emph{equilibrium value}.
The equilibrium value of $\text{WL}(X,Y,n)$ is an established result in the literature.

\begin{theorem}[Example 5, \cite{Aghajan2024Extension}]
	Consider an instance of the classical weakest-link game $\text{WL}(X,Y,n)$. Then the equilibrium payoff to the defender is given by
\begin{equation}
L(\alpha) =
\begin{cases}
1-\dfrac{\alpha}{2}, & \alpha \le 1,\\
\dfrac{1}{2\alpha}, & \alpha > 1,
\end{cases}
\label{eq:Lalpha}
\end{equation}
where $\alpha = \frac{nY}{X}$ denotes the relative strength of the attacker.
\end{theorem}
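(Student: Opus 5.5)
The plan is to reduce $\text{WL}(X,Y,n)$ to the single-battle General Lotto game and then use its known value. In the single-battle game, one player with expected budget $a$ faces an opponent with expected budget $b$, and the first player wins when its allocation is at least the opponent's. The classical result (Hart; Kovenock--Roberson) is that the first player's value is $1-\frac{b}{2a}$ when $b\le a$ and $\frac{a}{2b}$ when $b>a$. It is attained by uniform/mixture-of-uniform strategies: the stronger player plays $U[0,2a]$, and the weaker player places an atom at $0$ and is otherwise uniform. I will use the stronger form of this result: each side's equilibrium strategy guarantees the value against \emph{every} opponent distribution on $\mathbb{R}_+$ whose mean is at most the opponent's budget. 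With $a=X/n$ and $b=Y$ the ratio is $b/a=nY/X=\alpha$, which gives exactly $L(\alpha)$ in \eqref{eq:Lalpha}. It therefore suffices to show max-min $\ge L(\alpha)$ and min-max $\le L(\alpha)$; the value then follows from \eqref{eq:SV_inequality}.

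\emph{Lower bound (defender).} The defender draws a scalar $t\sim F^*$, where $F^*$ is the single-battle equilibrium strategy for budget $X/n$ against budget $Y$, and plays $x=(t,\dots,t)$. This is feasible because $\mathbb{E}[\sum_i x_i]=n\,\mathbb{E}[t]=X$. Fix any attacker distribution $F_\mcy\in\mcf(Y)$ and set $M=\max_i y_i$. The defender wins iff $t\ge M$. Since $\mathbb{E}[M]\le\mathbb{E}[\sum_i y_i]=Y$ and $M\ge 0$, the law of $M$ is a single-battle attack with mean at most $Y$. The guarantee of $F^*$ then gives $\pi_\mcx\ge L(\alpha)$.

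\emph{Upper bound (attacker).} The attacker draws a node $I$ uniformly from $\mcn$ and an amount $z\sim G^*$ independently of $I$. Here $G^*$ is the single-battle equilibrium strategy for budget $Y$ against budget $X/n$. The attacker puts $z$ on node $I$ and $0$ elsewhere, which uses exactly $Y$ in expectation. For any defender strategy, the weakest-link event implies $x_I\ge z$. Writing $H(u)=\Pr(z\le u)$, the defender's payoff is therefore at most $\mathbb{E}_x\big[\frac1n\sum_i H(x_i)\big]$. This equals the single-battle payoff of a defender who plays the mixture distribution of $x_J$ with $J$ uniform. That mixture has mean $\frac1n\mathbb{E}[\sum_i x_i]=X/n$, so the guarantee of $G^*$ bounds the payoff by $L(\alpha)$.

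\emph{Main obstacle.} The main work is confirming that the single-battle equilibrium strategies guarantee the value against \emph{all} admissible opponent distributions, including those with atoms, under the tie rule that the defender wins ties. For the defender's uniform strategy, ties have probability zero except possibly at the atom at $0$ in the $\alpha>1$ case, where the tie rule favours the defender. For the attacker, any atom in $G^*$ could let the defender win ties. If this breaks the exact bound, I would perturb the strategy, e.g.\ shift $G^*$ by a small $\varepsilon$ or use $U[\varepsilon,2Y-\varepsilon]$-type variants. This yields min-max $\le L(\alpha)+O(\varepsilon)$, and letting $\varepsilon\to 0$ gives the claim.
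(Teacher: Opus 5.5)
The paper does not prove this statement (it is imported from \cite{Aghajan2024Extension}). Your argument is correct, and it is the $m=1$, unit-weight, pooled-budget specialization of the weakest-link defender and best-shot attacker constructions the paper uses for Theorems~\ref{thm:lower} and~\ref{thm:upper}, with your step $\max_i y_i\le\sum_i y_i$ mirroring the $\max\le\sum$ step there. Your ``main obstacle'' needs no $\varepsilon$-perturbation: $H(u)=\Pr(z\le u)$ already counts the atom at $0$ (ties to the defender included), and the affine bounds $H(u)\le 1-\alpha+\frac{\alpha n u}{2X}$ for $\alpha\le 1$ and $H(u)\le \frac{u}{2Y}$ for $\alpha>1$ give $\mathbb{E}[H(x_J)]\le L(\alpha)$ exactly; this is the same $\min\{\cdot,1\}\le\cdot$ step as in the proof of Theorem~\ref{thm:upper}.
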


\subsection{Networked Defense Architecture}

In this paper, we consider a variation of the classical weakest-link model where the defender deploys multiple types of defensive resources against multiple types of attacks. Unlike the classical model where a single type of defensive resource is allocated across all nodes, here multiple defensive resources contribute to protection against different attack types through a networked effectiveness structure (see Figure~\ref{fig:model}). In particular, the elements of $\mathcal{N}=\{1,\ldots,n\}$ represent distinct types of attacks or vulnerabilities faced by a single system, rather than necessarily physical locations in a network.

The defender possesses $m$ defensive resource types, labeled $\mcm = \{1,\ldots,m\}$, each with individual budgets $\mathbf{X} = (X_1,\dots,X_m)$, where $X_j > 0$ $\forall j\in\mcm$.
The defender's resource types can be allocated across attack types, but are coupled through effectiveness weights.
In particular, we define $W=[w_{j,i}] \in \mathbb{R}_+^{m\times n}$ as the effectiveness matrix for the defender's resource types, where $w_{j,i}$ represents the effectiveness of defender resource type $j$ in protecting against attack type $i$.

In this setting, we assume that the attacker has $n$ resource types, where each is individually specialized to one of the $n$ attack types (see Figure~\ref{fig:model}). This captures a setting in which the system faces distinct types of attacks, such as phishing, malware, or ransomware, where each attack type is associated with a specialized attack resource, while a defensive resource may be effective against multiple attack types.

The individual budgets for the attacker's types are given by $\mathbf{Y}=(Y_1,\dots,Y_n)$, where $Y_i>0$ $\forall i\in\mcn$.

An \emph{allocation of effort} for the defender is a matrix $\bs{x} = \{x_{i,j}\}_{i\in\mcn, j\in\mcm} \in \mbb{R}_+^{n\times m}$, where $(x_{1,j},\ldots,x_{n,j}) \in \mbb{R}_+^n$ is an allocation vector of the defender's $j$th resource type across the $n$ attack types. An allocation of effort for the attacker is an allocation vector $y = (y_1,\ldots,y_n) \in \mbb{R}_+^n$, where $y_i$ represents the attack effort associated with attack type $i$. Given allocations of effort $\bs{x},y$, the payoff to the defender is given by
\begin{equation}
	\pi_\mcx(\bs{x},y) = \mathbf{1}\left\{ \sum_{j=1}^{m} w_{j,i} x_{i,j}
    \geq y_i, \ \forall i = 1,\ldots,n \right\}.
\end{equation}
In other words, the defender possesses the weakest-link objective, where the defender successfully protects the system if and only if its effective defense is sufficient against every attack type. Otherwise, the attacker successfully compromises the system through at least one attack type. This objective is appropriate for security settings where a successful attack through any one vulnerability is sufficient to compromise the system.

\begin{figure}[t]
\centering
\includegraphics[width=0.6\linewidth]{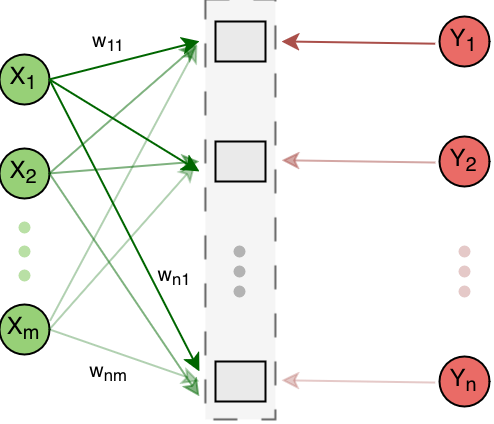}
\caption{Networked defense architecture. Multiple defensive resources contribute to protection against different attack types through heterogeneous effectiveness weights.}
\label{fig:model}
\end{figure}

Similarly to the classic setting, players are able to randomize over allocations of effort in a way that the expected allocations \emph{per resource type} satisfy the individual budget constraints. The defender chooses a distribution $F_\mcx$ over $\mathbb{R}_+^{n \times m}$ belonging to the set
\begin{equation}\label{eq:feasible_X}
\mcf_{\mcx}(\mathbf{X}) =
\left\{
\begin{aligned}
F_\mcx \in \Delta(\mathbb{R}_+^{n\times m}) \;:\;
& \mathbb{E}_{\boldsymbol{x} \sim F_\mcx}\!\left[\sum_{i=1}^n x_{i,j}\right] = X_j, \\
& \forall j=1,\dots,m
\end{aligned}
\right\}.
\end{equation}

The attacker chooses a distribution $F_\mcy$ over $\mathbb{R}_+^n$ satisfying
\begin{equation}\label{eq:feasible_Y}
\mcf_{\mcy}(\mathbf{Y}) =
\left\{
F_\mcy \in \Delta(\mathbb{R}_+^n) :
\mathbb{E}_{y \sim F_\mcy}[y_i] = Y_i,
\ \forall i=1,\dots,n
\right\}.
\end{equation}

Observe that the strategy sets $\mcf_{\mcx}(\mathbf{X}), \mcf_{\mcy}(\mathbf{Y})$ structurally differ from those given in the classic WL game~\eqref{eq:classic_feasible}, as they impose separate expected budget constraints for the individual resource types. Given mixed strategies $(F_\mcx,F_\mcy) \in \mcf_{\mcx}(\mathbf{X})\times \mcf_{\mcy}(\mathbf{Y})$, the expected payoff of the defender is
\begin{equation}
\pi_\mcx(F_\mcx,F_\mcy)
=
\mathbb{E}_{\bs{x}\sim F_\mcx,\;y\sim F_\mcy}
\bigg[
\mathbf{1}\left\{
\sum_{j=1}^{m} w_{j,i} x_{i,j} \ge y_i,\ \forall i
\right\}
\bigg].
\end{equation}

These elements define a strategic-form game that we refer to as the \emph{networked defense weakest-link game}, and denote an instance of the game with $\text{NDWL}(W,\mathbf{X},\mathbf{Y})$.

Here, the term ``networked'' refers to the heterogeneous effectiveness relationships between defensive resource types and attack types represented by $W$, rather than to a physical network topology or network dynamics.

We are again interested in characterizing the defender's security values, i.e., optimal or near-optimal values for the max-min and min-max problems in the context of the NDWL game.

\subsection{Relation to Existing Models}

The proposed NDWL game in this paper is not a \emph{direct} generalization of the classic WL game because the attacker has a unique resource type for each attack type. However, we could formulate the model in more generality, where the attacker also has its own effectiveness matrix. We leave this extension for future work. Nevertheless, the defender's effectiveness structure with multiple resource types offers a richer strategic landscape than the classic WL game, since each defensive resource type may be effective against multiple types of attacks.

The defender-side resource structure from the classic setting is recovered when there is only a single defensive resource type $m=1$ that is equally effective against all attack types, i.e., $w_{1,i}=1$ for all $i\in\mathcal{N}$.

Indeed, our formulation is a direct generalization of multi-resource weakest-link models studied in previous work \cite{aghajan2023multiresource,paarporn2025allocationheterogeneousresourcesgeneral}, where our NDWL game recovers these models when we set $m=n$ and $W$ to be a diagonal matrix, so that each defensive resource type is only effective against a single attack type. In contrast, the off-diagonal entries in $W$ allow a defensive resource type to contribute to protection against multiple attack types, thereby introducing flexibility in how defensive resources are routed across attack types. This modeling asymmetry captures settings in which attack resources are specialized by attack type, while defensive resources may provide protection against multiple types of attacks.


\section{main result}
\label{sec:result}

In this section, we detail our main results of the paper, which establish upper and lower bounds on the defender's min-max and max-min values, respectively. Before stating the Theorems, we first introduce special classes of strategies for the attacker and defender that we use in our proofs.

\begin{definition}[Best-shot attacker strategies]
\label{def:bestshot}

We define
\[
\mcf^{\mathrm{BS}}_{\mcy}(\mathbf{Y}) \subset \mcf_{\mcy}(\mathbf{Y})
\]
as the class of \emph{best-shot attacker strategies} (where $\mcf_{\mcy}(\mathbf{Y})$ is defined as in~\eqref{eq:feasible_Y}). Any strategy $F_\mcy \in \mcf^{\mathrm{BS}}_{\mcy}(\mathbf{Y})$ is completely parameterized by a pair $(\delta_Y,p)$, where $\delta_Y\in(0,1]$ and $p=(p_1,\dots,p_n)\in\Delta^n$.
An allocation of effort $y \in \mbb{R}_+^n$ is drawn from $F_\mcy$ according to the following process: with probability $1 - \delta_Y$, the attacker allocates zero effort to all attack types. With probability $\delta_Y$, it draws $U \sim \mathrm{Unif}[0,1]$, selects attack type $i$ with probability $p_i$, and allocates
\[
y_i = \frac{2Y_i}{\delta_Y p_i} U,
\qquad
y_k = 0 \quad \text{for } k \neq i.
\]
\end{definition}

Intuitively, a best-shot strategy concentrates all nonzero effort on a single, randomly selected attack type.
One can easily verify that any $F_\mcy \in \mcf^{\mathrm{BS}}_{\mcy}(\mathbf{Y})$ satisfies the mean constraints, $\mathbb{E}[y_i] = Y_i, \quad \forall i = 1,\dots,n$, as required from~\eqref{eq:feasible_Y}. The joint CDF of a best-shot strategy $F_\mcy \in \mcf^{\mathrm{BS}}_{\mcy}$ over allocations $y \in \mbb{R}_+^n$ can be expressed in the form
\begin{equation}
F_\mcy(y_1,\dots,y_n)
=
(1-\delta_Y)
+
\delta_Y \sum_{i=1}^n
p_i
\min\!\left\{
\frac{\delta_Y p_i}{2Y_i} y_i,
\,1
\right\}.
\label{eq:attacker_joint_cdf}
\end{equation}

\begin{definition}[Weakest-link defender strategies]
\label{def:weakestlink}

We define
\[
\mcf^{\mathrm{WL}}_{\mcx}(\mathbf{X}) \subset \mcf_{\mcx}(\mathbf{X})
\]
as the class of \emph{weakest-link defender strategies} (where $\mcf_{\mcx}(\mathbf{X})$ is defined as in~\eqref{eq:feasible_X}). Any strategy $F_\mcx \in \mcf^{\mathrm{WL}}_{\mcx}(\mathbf{X})$ is completely parameterized by a pair $(\delta_X,S)$, where $\delta_X \in (0,1]$ and
\[
S \in \mathcal{S} := \left\{ S \in \mathbb{R}_+^{m \times n} \;:\; \sum_{i=1}^n s_{j,i} = 1,\ \forall j = 1,\dots,m \right\}.
\]
An allocation of effort $\bs{x} \in \mbb{R}_+^{n\times m}$ is drawn from $F_\mcx$ according to the following process:
with probability $1 - \delta_X$, the defender allocates zero resources, i.e., $x_{i,j}=0$ for all $i,j$. With probability $\delta_X$, it draws $U \sim \mathrm{Unif}[0,1]$ and allocates
\[
x_{i,j} = \frac{2 X_j s_{j,i}}{\delta_X} U,
\qquad \forall i=1,\dots,n,\; j=1,\dots,m.
\]
We refer to a matrix $S \in \mcal{S}$ as a \emph{routing matrix}.
\end{definition}

Intuitively, all resource allocations $x_{i,j}$ are correlated through a single random variable $U$, and the routing matrix $S$ determines the proportions of each defender resource type that are allocated to each attack type.
One can verify that $F_\mcx \in \mcf_{\mcx}(\mathbf{X})$, since the induced allocation satisfies the mean constraints, $\mathbb{E}[\sum_{i=1}^n x_{i,j}] = X_j, \quad \forall j\in\mcal{M}$, as required by~\eqref{eq:feasible_X}. The joint CDF of a weakest-link strategy $F_\mcx \in \mcf^{\mathrm{WL}}_{\mcx}$ over allocations $\bs{x} \in \mbb{R}_+^{n\times m}$ can be expressed as
\begin{equation}
F_\mcx(\bs{x})
=
(1-\delta_X)
+
\delta_X
\min_{i=1,\dots,n \atop j=1,\dots,m}
\left\{
\frac{\delta_X}{2 X_j s_{j,i}} x_{i,j},
\,1
\right\}.
\label{eq:defender_joint_cdf}
\end{equation}

\subsection{Upper Bound on the Defender's Success Probability}
\label{sec:upper}

We now establish an upper bound on the defender's min-max value in the result below.
For a routing matrix $S\in\mathcal{S}$, define the effective expected defense level against attack type $i$ as
\begin{equation}
z_i(S)
:=
\sum_{j=1}^m w_{j,i}X_j s_{j,i},
\qquad i=1,\dots,n.
\label{eq:zi}
\ref{eq:zi}
\end{equation}

\begin{theorem}[Upper bound]
\label{thm:upper}
Consider the NDWL$(W,\mathbf{X},\mathbf{Y})$ game defined in Section~\ref{sec:model}.
Define
\begin{equation}
g(p,S)
:=
\sum_{i=1}^n p_i^2\frac{z_i(S)}{Y_i},
\label{eq:g_general}
\end{equation}
and
\begin{equation}
g^\star
:=
\min_{p\in\Delta^n}
\max_{S\in\mathcal{S}}
g(p,S).
\end{equation}

Let
\begin{equation}
\mathrm{UB} = L\!\left(\frac{1}{g^\star}\right)
=
\begin{cases}
\dfrac{g^\star}{2}, & g^\star \le 1, \\[6pt]
1 - \dfrac{1}{2 g^\star}, & g^\star \ge 1.
\end{cases}
\end{equation}

Then the defender's min-max value satisfies
\begin{equation}
\min_{F_\mcy \in \mcf_{\mcy}(\mathbf{Y})}
\max_{F_\mcx \in \mcf_{\mcx}(\mathbf{X})}
\pi_X(F_\mcx,F_\mcy)
\;\le\;
\mathrm{UB}.
\end{equation}
\end{theorem}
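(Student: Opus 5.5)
The plan is to fix a single attacker strategy from the best-shot class $\mcf^{\mathrm{BS}}_{\mcy}(\mathbf{Y})$ of Definition~\ref{def:bestshot} and show that every defender strategy earns at most $\mathrm{UB}$ against it. Since the min-max value is at most $\max_{F_\mcx}\pi_\mcx(F_\mcx,F_\mcy)$ for any fixed $F_\mcy$, this proves the theorem. The attacker's parameters are $p=p^\star$ and some $\delta_Y\in(0,1]$ to be tuned, where $p^\star$ attains $g^\star$. The minimum exists because $S\mapsto g(p,S)$ is linear on the compact set $\mathcal{S}$, so $p\mapsto\max_S g(p,S)$ is continuous on the compact simplex.

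\textbf{Step 1: payoff against a fixed best-shot strategy.} Fix an arbitrary $F_\mcx\in\mcf_\mcx(\mathbf{X})$ and write $D_i:=\sum_j w_{j,i}x_{i,j}$ for the realized effective defense on type $i$. Condition on $\bs{x}$ and use independence of the two players' draws.
\begin{itemize}
\item With probability $1-\delta_Y$ the attack is null and the defender wins.
\item Otherwise type $i$ is attacked with probability $p_i$, with $y_i$ uniform on $[0,2Y_i/(\delta_Y p_i)]$. The defender then wins with probability $\min\{\delta_Y p_i D_i/(2Y_i),1\}$.
\end{itemize}
Types with $p_i=0$ are never attacked, which avoids the division in Definition~\ref{def:bestshot}. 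Using $\min\{a,1\}\le a$ and then averaging over $\bs{x}$ gives
\begin{equation*}
\pi_\mcx(F_\mcx,F_\mcy)\le (1-\delta_Y)+\frac{\delta_Y^2}{2}\sum_{i=1}^n p_i^2\frac{\mathbb{E}[D_i]}{Y_i}.
\end{equation*}

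\textbf{Step 2: reduction to a routing matrix.} This is the key step: every defender strategy, however correlated, enters the bound only through the expected allocations $\mathbb{E}[x_{i,j}]$. Define $s_{j,i}:=\mathbb{E}_{F_\mcx}[x_{i,j}]/X_j\ge 0$. The budget constraints in \eqref{eq:feasible_X} give $\sum_i s_{j,i}=1$ for each $j$, so $S\in\mathcal{S}$. By linearity, $\mathbb{E}[D_i]=\sum_j w_{j,i}X_j s_{j,i}=z_i(S)$ as in \eqref{eq:zi}. Hence
\begin{equation*}
\pi_\mcx(F_\mcx,F_\mcy)\le 1-\delta_Y+\frac{\delta_Y^2}{2}g(p^\star,S)\le 1-\delta_Y+\frac{\delta_Y^2}{2}g^\star,
\end{equation*}
uniformly over all $F_\mcx$.

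\textbf{Step 3: tuning $\delta_Y$.} It remains to minimize $h(\delta)=1-\delta+\delta^2 g^\star/2$ over $\delta\in(0,1]$. The unconstrained minimizer is $\delta=1/g^\star$.
\begin{itemize}
\item If $g^\star\ge1$, then $1/g^\star\le 1$ is feasible, and $h(1/g^\star)=1-1/(2g^\star)$.
\item If $g^\star<1$ (including $g^\star=0$), $h$ is decreasing on $(0,1]$, so $\delta_Y=1$ gives $h(1)=g^\star/2$.
\end{itemize}
Both cases reproduce $\mathrm{UB}=L(1/g^\star)$.

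The main point needing care is Step 2. One must verify that discarding the cap $\min\{\cdot,1\}$ costs nothing in the final bound, which it does not, since the tuned $\delta_Y$ matches the classical $L$ form. One must also check that the attacker's best-shot randomization, which is independent of the defender's draw, justifies taking expectations term by term. Beyond that, the argument is the standard General Lotto upper-bound calculation with $g^\star$ playing the role of $1/\alpha$.
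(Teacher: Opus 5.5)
Your argument is correct up to one small repair, and it is essentially the paper's proof: best-shot attacker, drop the cap $\min\{\cdot,1\}$, read off a routing matrix from $s_{j,i}=\mathbb{E}[x_{i,j}]/X_j$, then tune $\delta_Y$. Your order of quantifiers is cleaner than the paper's. You fix $p^\star$, bound $g(p^\star,S)\le g^\star$ uniformly over $F_{\mcx}$, and only then choose $\delta_Y$. The paper instead tunes $\delta_Y$ against the $F_{\mcx}$-dependent $S$, and recovers the min-max order only implicitly through monotonicity of $L(1/g)$.

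The repair, which the paper also glosses over, concerns zero entries of $p^\star$. If $p^\star_i=0$ for some $i$, the best-shot strategy has $\mathbb{E}[y_i]=0\neq Y_i$. It therefore lies outside $\mcf_{\mcy}(\mathbf{Y})$, and your opening reduction does not apply to it; avoiding the division is not enough.

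This case does arise. If some column of $W$ is zero, then $g^\star=0$, and unless $W=0$ every minimizer lies on the boundary of $\Delta^n$. Conversely, if $g^\star>0$, moving mass $\epsilon$ onto a zero coordinate lowers $\max_{S}g(p,S)$ at first order, so minimizers are interior.

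The fix is to add the deterministic amount $Y_i$ to each coordinate with $p^\star_i=0$. This restores feasibility and can only lower the defender's payoff, since $\mathbf{1}\{\sum_j w_{j,i}x_{i,j}\ge y_i\ \forall i\}$ is nonincreasing in $y$. Alternatively, perturb $p^\star$ into the interior of the simplex and let the perturbation vanish, using continuity of $p\mapsto\max_{S}g(p,S)$ and of the bound $L(1/g)$ in $g$.
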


\begin{proof}
Suppose $F_\mcx \in \mcf_{\mcx}(\mathbf{X})$. We begin by evaluating the defender's payoff against an attacker strategy $F_\mcy \in \mcf_{\mcy}^{\mathrm{BS}}(\mathbf{Y})$, which has parameters $(\delta_Y,p)$ (from Definition~\ref{def:bestshot}).
Using its CDF form~\eqref{eq:attacker_joint_cdf}, we can write
\begin{equation}
\begin{aligned}
\pi_X(F_\mcx,F_\mcy)
&=
\mathbb{E}_{x\sim F_\mcx}\bigg[
F_\mcy\!\left(
\sum_{j=1}^m w_{j,1}x_{1,j},\dots,\sum_{j=1}^m w_{j,n}x_{n,j}
\right)
\bigg] \\
&=
\mathbb{E}_{x\sim F_\mcx}
\Big[
(1-\delta_Y)
+
\delta_Y \sum_{i=1}^n p_i\, \min\{M_i(x),1\}
\Big].
\end{aligned}
\end{equation}

Define
\begin{equation}
M_i(x):=\frac{\delta_Y p_i}{2Y_i}\sum_{j=1}^m w_{j,i}x_{i,j}.
\end{equation}
Then, for every realization $x$, we have
\[
\min\{M_i(x),1\}\le M_i(x).
\]
Taking expectation with respect to $x\sim F_\mcx$ yields
\begin{equation}
\mathbb{E}_{x\sim F_\mcx}[\min\{M_i(x),1\}]
\le
\mathbb{E}_{x\sim F_\mcx}[M_i(x)].
\end{equation}
Substituting the definition of $M_i(x)$ and using linearity of expectation, we obtain
\begin{equation}
\mathbb{E}_{x\sim F_\mcx}[M_i(x)]
=
\frac{\delta_Y p_i}{2Y_i}\,
\mathbb{E}_{x\sim F_\mcx}\bigg[\sum_{j=1}^m w_{j,i}x_{i,j}\bigg].
\end{equation}
Hence,
\begin{equation}
\mathbb{E}_{x\sim F_\mcx}[\min\{M_i(x),1\}]
\le
\frac{\delta_Y p_i}{2Y_i}\,
\mathbb{E}_{x\sim F_\mcx}\bigg[\sum_{j=1}^m w_{j,i}x_{i,j}\bigg].
\end{equation}

Thus,
\begin{equation}
\pi_X(F_\mcx,F_\mcy)
\le
1-\delta_Y
+
\frac{\delta_Y^2}{2}
\sum_{i=1}^n p_i^2
\frac{
\mathbb{E}_{x\sim F_\mcx}
\big[\sum_{j=1}^m w_{j,i}x_{i,j}\big]
}{Y_i}.
\label{eq:upper_before_S}
\end{equation}

We now express the expected defense levels through an induced routing structure:
\begin{equation}
\mathbb{E}_{x\sim F_\mcx}\bigg[\sum_{j=1}^m w_{j,i}x_{i,j}\bigg]
=
\sum_{j=1}^m w_{j,i}\,
\mathbb{E}_{x\sim F_\mcx}[x_{i,j}].
\end{equation}

For each resource $j$, define
\begin{equation}
s_{j,i}:=
\frac{\mathbb{E}_{x\sim F_\mcx}[x_{i,j}]}{X_j},
\qquad i=1,\dots,n,\ j=1,\dots,m.
\end{equation}
Hence $S=[s_{j,i}]$ is a feasible routing matrix in the sense of Definition~\ref{def:weakestlink}.
Then,
\begin{equation}
\mathbb{E}_{x\sim F_\mcx}\bigg[\sum_{j=1}^m w_{j,i}x_{i,j}\bigg]
=
z_i(S).
\end{equation}

Since every mixed strategy $F_\mcx$ induces a feasible routing matrix $S$, the maximization over $F_\mcx$ can be upper bounded by a maximization over feasible routing matrices.

Substituting into~\eqref{eq:upper_before_S} yields
\begin{equation}
\pi_X(F_\mcx,F_\mcy)
\le
1-\delta_Y
+
\frac{\delta_Y^2}{2}
\sum_{i=1}^n p_i^2\frac{z_i(S)}{Y_i}.
\end{equation}

Thus,
\begin{equation}
\pi_X(F_\mcx,F_\mcy)
\le
1-\delta_Y+\frac{\delta_Y^2}{2}g(p,S).
\end{equation}

Minimizing over $\delta_Y\in[0,1]$ yields
\begin{equation}
L\!\left(\frac{1}{g}\right)
=
\begin{cases}
\dfrac{g}{2}, & g\le 1, \\[6pt]
1-\dfrac{1}{2g}, & g\ge 1.
\end{cases}
\end{equation}
Indeed, the minimum is reached at $\delta_Y=1$ when $g\le 1$ and at $\delta_Y=1/g$ when $g\ge 1$. Therefore,
\begin{equation}
\pi_X(F_\mcx,F_\mcy)
\le
L\!\left(\frac{1}{g(p,S)}\right).
\end{equation}

Since $L(1/g)$ increases in $g>0$, minimizing over $p\in\Delta^n$ and maximizing over feasible $S$ yields
\[
\mathrm{UB}
:=
L\!\left(
\frac{1}{\min_{p\in\Delta^n}\max_{S\in\mathcal{S}} g(p,S)}
\right).
\]
Hence,
\begin{equation}
\min_{F_\mcy \in \mcf_{\mcy}^{\mathrm{BS}}(\mathbf{Y})}
\max_{F_\mcx \in \mcf_{\mcx}(\mathbf{X})}
\pi_X(F_\mcx,F_\mcy)
\;\le\;
\mathrm{UB}.
\end{equation}

Finally, since this derivation is based on the attacker using a best-shot strategy $F_\mcy\in\mcf_{\mcy}^{\mathrm{BS}}(\mathbf{Y})$, where $\mcf_{\mcy}^{\mathrm{BS}}(\mathbf{Y}) \subset \mcf_{\mcy}(\mathbf{Y})$, we have
\begin{align}
\min_{F_\mcy\in\mcf_{\mcy}(\mathbf{Y})}
\max_{F_\mcx\in\mcf_{\mcx}(\mathbf{X})}
\pi_X(F_\mcx,F_\mcy)
&\le
\min_{F_\mcy\in\mcf_{\mcy}^{\mathrm{BS}}(\mathbf{Y})}
\max_{F_\mcx\in\mcf_{\mcx}(\mathbf{X})}
\pi_X(F_\mcx,F_\mcy) \nonumber\\
&\le
\mathrm{UB}.
\end{align}
\end{proof}
The optimization problem that characterizes $g^\star$ exhibits a favorable structural form: for fixed $S$, $g(\cdot,S)$ is a weighted sum-of-squares (and thus convex) in $p$, since $z_i(S)/Y_i\geq 0$, and for fixed $p$, it is affine in $S$. Moreover, because $\Delta^n$ and the set of feasible routing matrices are convex, the function $\max_{S\in\mathcal{S}}g(p,S)$ is convex in $p$ as a pointwise maximum of convex functions. Hence, the upper-bound problem is a convex minimization over the simplex.

\subsection{Lower Bound on the Defender's Success Probability}
\label{sec:lower}

We next establish a lower bound on the defender's max-min value in the result below.

\begin{theorem}[Lower bound]
\label{thm:lower}
Consider the NDWL$(W,\mathbf{X},\mathbf{Y})$ game defined in Section~\ref{sec:model}. 
Define
\begin{equation}
\alpha(S)
:=
\sum_{i=1}^n \frac{Y_i}{z_i(S)},
\label{eq:alpha}
\end{equation}
and
\begin{equation}
\alpha^\star
:=
\min_{S\in\mathcal{S}}\alpha(S).
\label{eq:alpha_star}
\end{equation}

and let
\begin{equation}
LB = L(\alpha^\star)
=
\begin{cases}
1-\dfrac{\alpha^\star}{2}, & \alpha^\star\le 1,\\[6pt]
\dfrac{1}{2\alpha^\star}, & \alpha^\star\ge 1.
\end{cases}
\label{eq:lower_bound_final}
\end{equation}

Then the defender's max-min value satisfies
\begin{equation}
\max_{F_\mcx \in \mcf_{\mcx}(\mathbf{X})}
\min_{F_\mcy \in \mcf_{\mcy}(\mathbf{Y})}
\pi_X(F_\mcx,F_\mcy)
\;\ge\;
\mathrm{LB}.
\end{equation}
\end{theorem}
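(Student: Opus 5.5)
The plan is to fix a weakest-link defender strategy $F_\mcx\in\mcf^{\mathrm{WL}}_{\mcx}(\mathbf{X})$ with parameters $(\delta_X,S)$, where $S$ attains (or approximates) $\alpha^\star$, and to bound its payoff against an \emph{arbitrary} attacker strategy $F_\mcy\in\mcf_{\mcy}(\mathbf{Y})$. Under this strategy the effective defense on attack type $i$ is
\[
\sum_{j=1}^m w_{j,i}x_{i,j}=\frac{2U}{\delta_X}\sum_{j=1}^m w_{j,i}X_js_{j,i}=\frac{2z_i(S)}{\delta_X}U ,
\]
with probability $\delta_X$, and it is zero otherwise. So all effective defense levels are driven by the single uniform variable $U$. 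The defender wins against a realization $y$ iff $\frac{2z_i(S)}{\delta_X}U\ge y_i$ for all $i$, that is, iff $U\ge \max_i \frac{\delta_X y_i}{2z_i(S)}$. (I will assume $z_i(S)>0$ for all $i$; otherwise $\alpha(S)=\infty$ and there is nothing to prove.) Conditioning on $y$ gives
\[
\pi_X(F_\mcx,F_\mcy)=\delta_X\,\mathbb{E}_{y}\Big[\big(1-\max_i \tfrac{\delta_X y_i}{2z_i(S)}\big)_+\Big],
\]
ignoring the measure-zero tie event and the zero-effort branch. That branch only helps the defender when $y=0$, and dropping it keeps this a valid lower bound.

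Next I lower bound this uniformly over $F_\mcy$. I use $(1-t)_+\ge 1-t$ and $\max_i a_i\le\sum_i a_i$ for nonnegative $a_i$, and then linearity of expectation with $\mathbb{E}[y_i]=Y_i$:
\[
\pi_X\ge \delta_X\Big(1-\frac{\delta_X}{2}\sum_{i=1}^n\frac{Y_i}{z_i(S)}\Big)=\delta_X-\frac{\delta_X^2}{2}\alpha(S).
\]
This mirrors the upper-bound proof: there a best-shot attacker let $\min\{\cdot,1\}\le\cdot$ linearize, and here the correlated defender lets the sum dominate the max. Maximizing $\delta_X-\frac{\delta_X^2}{2}\alpha$ over $\delta_X\in(0,1]$ gives the following.
\begin{itemize}
\item If $\alpha\le1$, take $\delta_X=1$, which yields $1-\alpha/2$.
\item If $\alpha\ge1$, take $\delta_X=1/\alpha$, which yields $1/(2\alpha)$.
\end{itemize}
In both cases the value is exactly $L(\alpha(S))$. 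Choosing $S$ to minimize $\alpha(S)$, and using that $L$ is decreasing, gives $\min_{F_\mcy}\pi_X(F_\mcx,F_\mcy)\ge L(\alpha^\star)$. Hence the max-min value is at least $\mathrm{LB}$.

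The main obstacle is justifying that $\alpha^\star$ is attained, or else handling it by a limiting argument. The map $S\mapsto\alpha(S)$ is convex, as a sum of $Y_i/z_i$ with each $z_i$ affine in $S$. It is lower semicontinuous into $(0,\infty]$ on the compact set $\mathcal{S}$, so a minimizer exists whenever some $S$ makes every $z_i(S)>0$. If no such $S$ exists, then $\alpha^\star=\infty$ and the bound is the trivial $0$. Even without attainment, taking $S_k$ with $\alpha(S_k)\downarrow\alpha^\star$ and using continuity of $L$ gives the same conclusion. The remaining care is bookkeeping: the tie convention $\ge$ in the payoff, and checking that the weakest-link strategy is feasible, which the paper already verifies after Definition~\ref{def:weakestlink}.
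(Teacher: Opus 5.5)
Your proof is correct and follows the paper's argument: the same weakest-link defender strategy $(\delta_X,S)$, the same bound $\max_i \le \sum_i$ followed by linearization via $\mathbb{E}[y_i]=Y_i$, the same optimization over $\delta_X$ giving $L(\alpha(S))$, and then minimization over $S$. The one difference is that you read off the $\max_i$ directly from the single-$U$ structure (the defender wins iff $U\ge\max_i \delta_X y_i/(2z_i(S))$), whereas the paper reaches the same expression via inclusion--exclusion and the min--max identity cited from its reference; your route is shorter, correctly treats the zero-effort branch as giving only an inequality where the paper writes equalities, and also handles attainment of $\alpha^\star$ and the degenerate case $z_i(S)=0$, which the paper leaves implicit.
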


\begin{proof}
Suppose $F_\mcy \in \mcf_{\mcy}(\mathbf{Y})$, and suppose the defender employs a weakest-link strategy $F_\mcx \in \mcf_{\mcx}^{\mathrm{WL}}(\mathbf{X})$, which has associated parameters $(\delta_X,S)$ (from Definition~\ref{def:weakestlink}).

Under the weakest-link strategy $F_\mcx$, a realization of the effective defense level against attack type $i$ is
\[
\tilde z_i=\frac{2}{\delta_X}z_i(S)\,U,
\qquad U\sim \mathrm{Unif}[0,1],
\]
where
\[
z_i(S)=\sum_{j=1}^m w_{j,i}X_j s_{j,i}.
\]
Fix an attack realization $y=(y_1,\dots,y_n)$ and define
\begin{equation}
A_i(y):=\{\tilde z_i < y_i\},
\qquad i\in \mathcal{N}.
\end{equation}
The attacker succeeds whenever the defense is insufficient against at least one attack type, i.e.,
\begin{equation}
\bigcup_{i\in \mathcal{N}} A_i(y).
\end{equation}

Hence, for fixed $y$, the attacker's success probability is
\begin{equation}
\Pr\!\left(\bigcup_{i\in \mathcal{N}} A_i(y)\right).
\end{equation}

Applying the inclusion--exclusion principle over subsets $T \subseteq \mathcal{N}$ with $|T|\ge 1$ gives
\begin{equation}
\Pr\!\left(\bigcup_{i\in \mathcal{N}} A_i(y)\right)
=
\sum_{T \subseteq \mathcal{N},\ |T|\ge 1}
(-1)^{|T|+1}
\Pr\!\left(\bigcap_{i\in T} A_i(y)\right).
\end{equation}

Define
\begin{equation}
F_{\mcx,T}(y)
:=
\Pr\!\left(\bigcap_{i\in T} \{\tilde z_i < y_i\}\right).
\end{equation}

Taking expectation with respect to $y\sim F_\mcy$, we can express the attacker's expected payoff as
\begin{equation}
\pi_\mcy(F_\mcx,F_\mcy)
=
\mathbb{E}_{y\sim F_\mcy}
\left[
\sum_{T \subseteq \mathcal{N},\ |T|\ge 1}
(-1)^{|T|+1} F_{\mcx,T}(y)
\right].
\label{eq:lower_ie}
\end{equation}

Under the weakest-link strategy, the quantity $F_{\mcx,T}(y)$ can be obtained from the joint CDF in~\eqref{eq:defender_joint_cdf}. In particular, for each nonempty subset $T$,
\begin{equation}
F_{\mcx,T}(y)
=
(1-\delta_X)
+
\delta_X
\Pr\!\left(
U <
\min_{i\in T}
\frac{\delta_X}{2z_i(S)} y_i
\right).
\end{equation}

Define
\begin{equation}
m_i(y_i)
:=
\min
\left\{
\frac{\delta_X}{2z_i(S)} y_i,\,
1
\right\},
\qquad i\in \mathcal{N}.
\label{eq:mi_def}
\end{equation}

Since $U\sim\mathrm{Unif}[0,1]$, this becomes
\begin{equation}
F_{\mcx,T}(y)
=
(1-\delta_X)
+
\delta_X
\min_{i\in T} m_i(y_i).
\end{equation}

Substituting into~\eqref{eq:lower_ie} and simplifying using the inclusion--exclusion identity yields
\begin{align}
\pi_\mcy(F_\mcx,F_\mcy)
&=
\mathbb{E}_{y\sim F_\mcy}
\!\left[
(1-\delta_X)
\sum_{\substack{T \subseteq \mathcal{N}\\ |T|\ge 1}}
(-1)^{|T|+1}
\right]
\nonumber\\
&\quad+
\mathbb{E}_{y\sim F_\mcy}
\!\left[
\delta_X
\sum_{\substack{T \subseteq \mathcal{N}\\ |T|\ge 1}}
(-1)^{|T|+1}
\min_{i\in T} m_i(y_i)
\right].
\end{align}

To proceed, we leverage the following technical identities:
\begin{equation}
\sum_{T \subseteq \mathcal{N},\ |T|\ge 1}
(-1)^{|T|+1}
=
1,
\end{equation}
and
\begin{equation}
\sum_{T \subseteq \mathcal{N},\ |T|\ge 1}
(-1)^{|T|+1}
\min_{i\in T} m_i
=
\max_{i\in \mathcal{N}} m_i,
\label{eq:max_identity}
\end{equation}
which holds for any collection of nonnegative scalars
$\{m_i\}_{i\in \mathcal{N}}$
(see \cite{paarporn2025allocationheterogeneousresourcesgeneral}, Appendix A).
We then obtain
\begin{equation}
\pi_\mcy(F_\mcx,F_\mcy)
=
\mathbb{E}_{y\sim F_\mcy}
\big[
(1-\delta_X)
+
\delta_X
\max_{i\in \mathcal{N}} m_i(y_i)
\big].
\label{eq:piY_max}
\end{equation}

Since
$\max_i m_i(y_i) \le \sum_{i\in \mathcal{N}} m_i(y_i)$,
it follows that
\begin{equation}
\pi_\mcy(F_\mcx,F_\mcy)
\le
(1-\delta_X)
+
\delta_X
\sum_{i\in \mathcal{N}}
\mathbb{E}_{y\sim F_\mcy}[m_i(y_i)].
\end{equation}

From~\eqref{eq:mi_def},
\begin{equation}
m_i(y_i)
\le
\frac{\delta_X}{2z_i(S)} y_i.
\end{equation}
Hence,
\begin{equation}
\mathbb{E}_{y\sim F_\mcy}[m_i(y_i)]
\le
\frac{\delta_X}{2z_i(S)}
\,\mathbb{E}_{y\sim F_\mcy}[y_i]
=
\frac{\delta_X}{2z_i(S)}Y_i.
\end{equation}

Therefore,
\begin{equation}
\pi_\mcy(F_\mcx,F_\mcy)
\le
(1-\delta_X)
+
\frac{\delta_X^2}{2}
\sum_{i\in \mathcal{N}} \frac{Y_i}{z_i(S)}.
\end{equation}

This shows that the lower bound depends on the routing matrix $S$ only through the aggregate quantities $z_i(S)$, reducing the problem to minimizing $\alpha(S)$ over feasible routings.

Since $\pi_\mcx=1-\pi_\mcy$, we obtain
\begin{equation}
\pi_\mcx(F_\mcx,F_\mcy)
\ge
\delta_X
\left(
1-\frac{\delta_X}{2}\alpha(S)
\right).
\end{equation}

Maximizing over $\delta_X\in[0,1]$ yields
\begin{equation}
\pi_X(F_\mcx,F_\mcy)
\ge
L(\alpha(S)).
\end{equation}

Finally, minimizing $\alpha(S)$ over feasible routing matrices $S$ yields
\begin{equation}
\max_{F_\mcx \in \mcf_{\mcx}^{\mathrm{WL}}(\mathbf{X})}
\min_{F_\mcy \in \mcf_{\mcy}(\mathbf{Y})}
\pi_X
\;\ge\;
\mathrm{LB},
\end{equation}
where
\[
\mathrm{LB}
:=
L(\alpha^\star),
\qquad
\alpha^\star
=
\min_{S\in\mathcal{S}}\alpha(S).
\]

Because this derivation is based on the defender using a weakest-link strategy $F_\mcx \in \mcf_{\mcx}^{\mathrm{WL}}(\mathbf{X})$, where
\(
\mcf_{\mcx}^{\mathrm{WL}}(\mathbf{X})
\subset
\mcf_{\mcx}(\mathbf{X}),
\)
we have
\begin{align}
\max_{F_\mcx \in\mcf_{\mcx}(\mathbf{X})}
\min_{F_\mcy \in\mcf_{\mcy}(\mathbf{Y})}
\pi_X
&\ge
\max_{F_\mcx \in\mcf_{\mcx}^{\mathrm{WL}}(\mathbf{X})}
\min_{F_\mcy \in\mcf_{\mcy}(\mathbf{Y})}
\pi_X
\nonumber\\
&\ge
\mathrm{LB}.
\end{align}
\end{proof}

The optimization problem defining the lower bound has a particularly favorable structure. The objective function
\[
\alpha(S)
=
\sum_{i=1}^n \frac{Y_i}{z_i(S)}
\]
is convex in $S$, since each term $Y_i/z_i(S)$ is convex over $z_i(S)>0$, and $z_i(S)$ is affine in $S$.

In addition, the set of admissible routing matrices is convex, since it is defined by linear constraints. Consequently, minimizing $\alpha(S)$ over this set is a convex program. This allows us to compute the lower bound to global optimality using standard convex optimization techniques.


\subsection{Numerical Validation for Multiple Attack Types}

\noindent
Recall that the lower and upper bounds satisfy
\begin{equation}
\mathrm{LB}
\;\le\;
\max_{\mcf_{\mcx}(\mathbf{X})}\min_{\mcf_{\mcy}(\mathbf{Y})}\pi_X
\;\le\;
\min_{\mcf_{\mcy}(\mathbf{Y})}\max_{\mcf_{\mcx}(\mathbf{X})}\pi_X
\;\le\;
\mathrm{UB}.
\end{equation}

An analytical proof of equality between the upper and lower bounds has not yet been established for general \(n\). We thus present numerical evidence of the tightness of our bounds.

We generate a random instance with $n=5$ attack types and vary the budgets of the defender by a multiplicative parameter $\lambda$, i.e. $\mathbf{X} = \lambda \mathbf{X}_{\mathrm{base}}$ while holding the attacker budgets $\mathbf{Y}$ constant. For each such budget choice, we compute optimized versions of the upper and lower bounds (denoted $\mathrm{UB}$ and $\mathrm{LB}$ respectively) and compute the gap between them
\[
\mathrm{gap} := \mathrm{UB} - \mathrm{LB}.
\]

The upper- and lower-bound optimization problems are solved numerically in CVX on MATLAB. As detailed in Section~\ref{sec:result}, both optimization problems are convex. In particular, the upper-bound problem is a convex minimization problem over the simplex, since the objective $g(p,S)$ is convex in $p$ and affine in $S$, so maximizing over $S$ produces a convex objective in terms of $p$ only. The lower-bound problem is also convex, since the objective $\alpha(S)$ is convex in $S$ and the set of feasible routing matrices is convex. 

\begin{figure}[!t]
\centering
\includegraphics[width=0.48\textwidth]{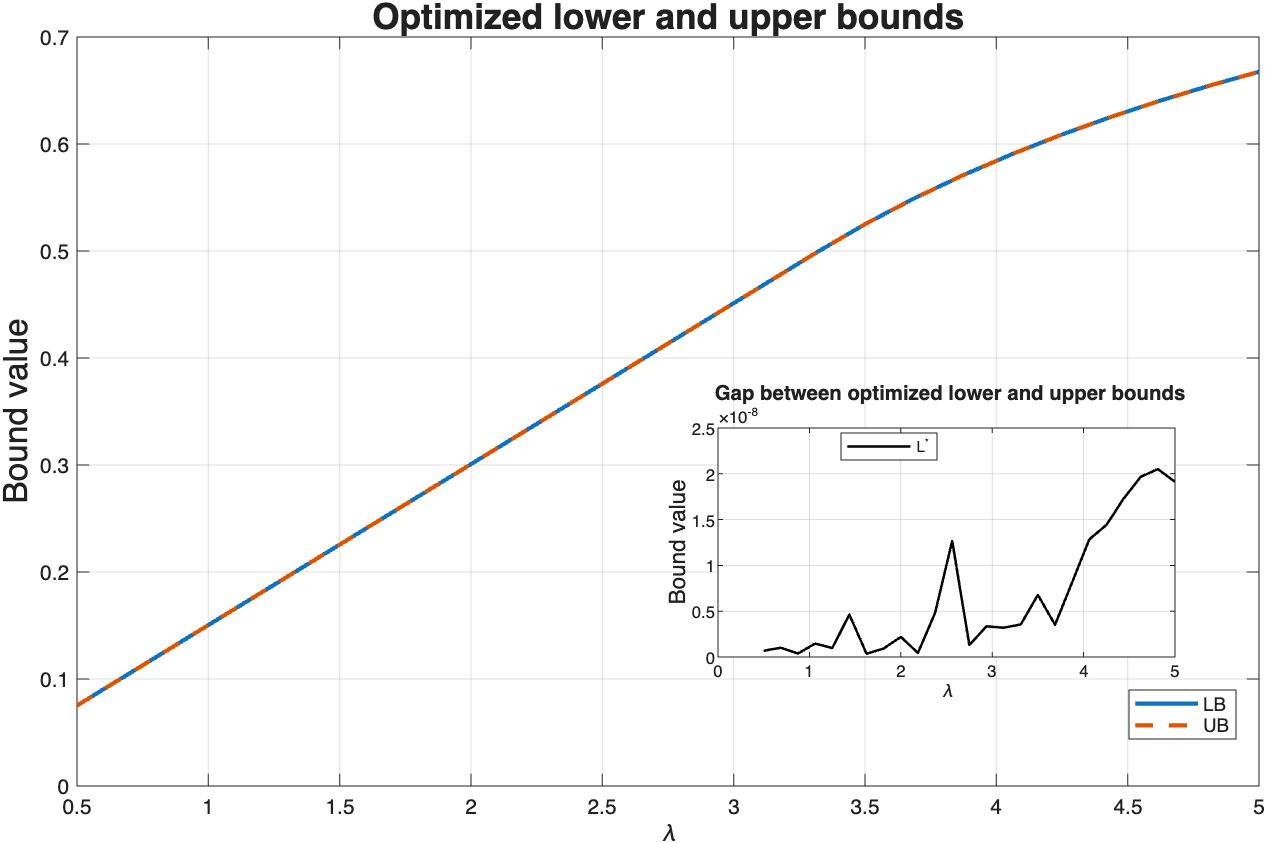}
\caption{Optimized upper and lower bounds $UB$ and $LB$ as a function of the scaling parameter $\lambda$ for $n=5$ attack types. Here, $\lambda$ scales the defender's resource vector according to $\mathbf{X} = \lambda \mathbf{X}_{\mathrm{base}}$. The two curves are visually indistinguishable. The inset shows the corresponding gap $UB - LB$, which remains below $3 \times 10^{-8}$ across all tested parameter values.}
\label{fig:multinode_bounds}
\end{figure}

As shown in Fig.~\ref{fig:multinode_bounds}, the upper and lower bounds are indistinguishable across the entire parameter range. The inset reveals that the gap remains on the order of numerical precision, never exceeding $3 \times 10^{-8}$.

These observations provide strong numerical evidence that the equality $UB = LB$, which is established analytically for the two-atack-type case in Section~\ref{sec:two-nodes}, may extend to the general $n$-attack-type setting.

\begin{figure*}[t]
\centering

\begin{subfigure}{0.24\linewidth}
\includegraphics[width=\linewidth]{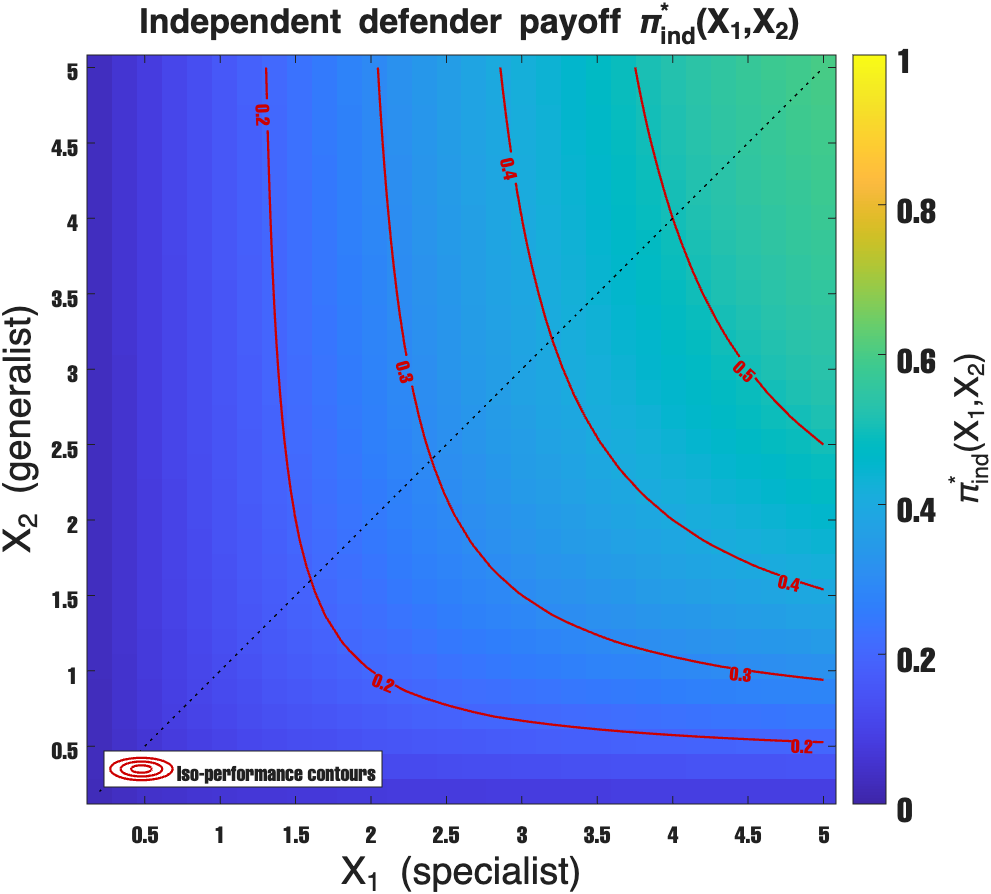}
\caption{Independent}
\end{subfigure}
\hfill
\begin{subfigure}{0.24\linewidth}
\includegraphics[width=\linewidth]{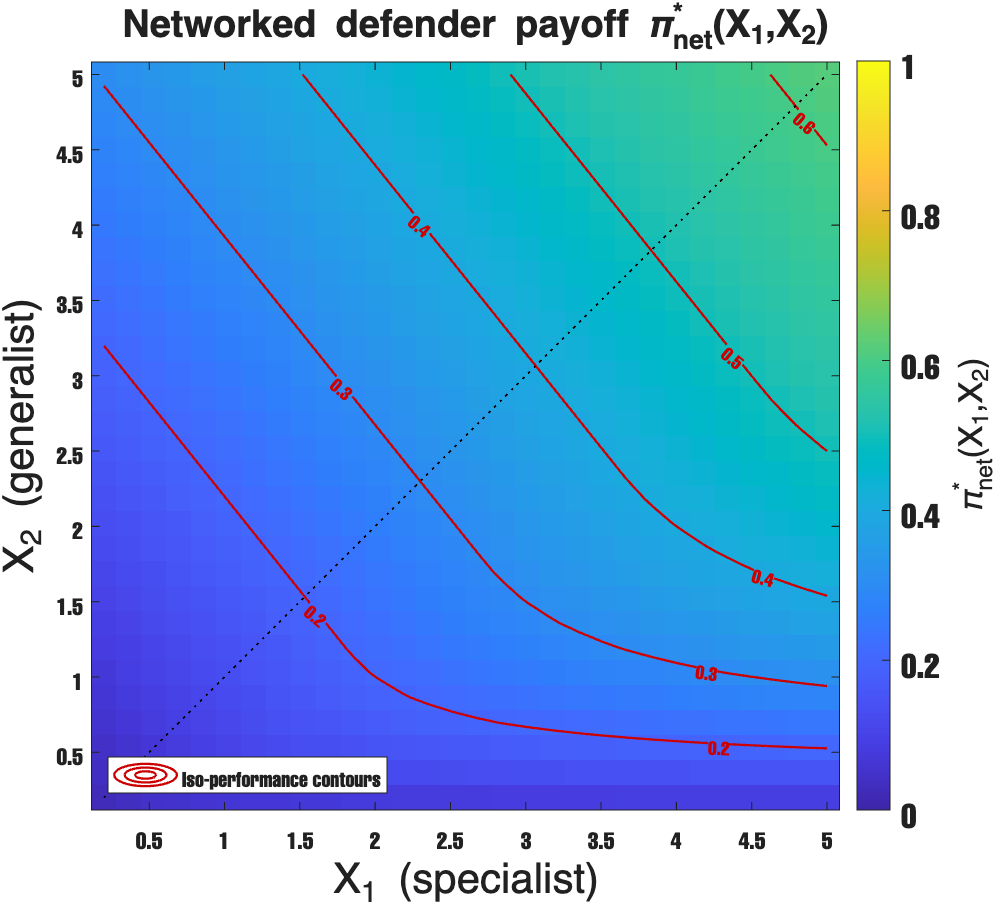}
\caption{Networked}
\end{subfigure}
\hfill
\begin{subfigure}{0.24\linewidth}
\includegraphics[width=\linewidth]{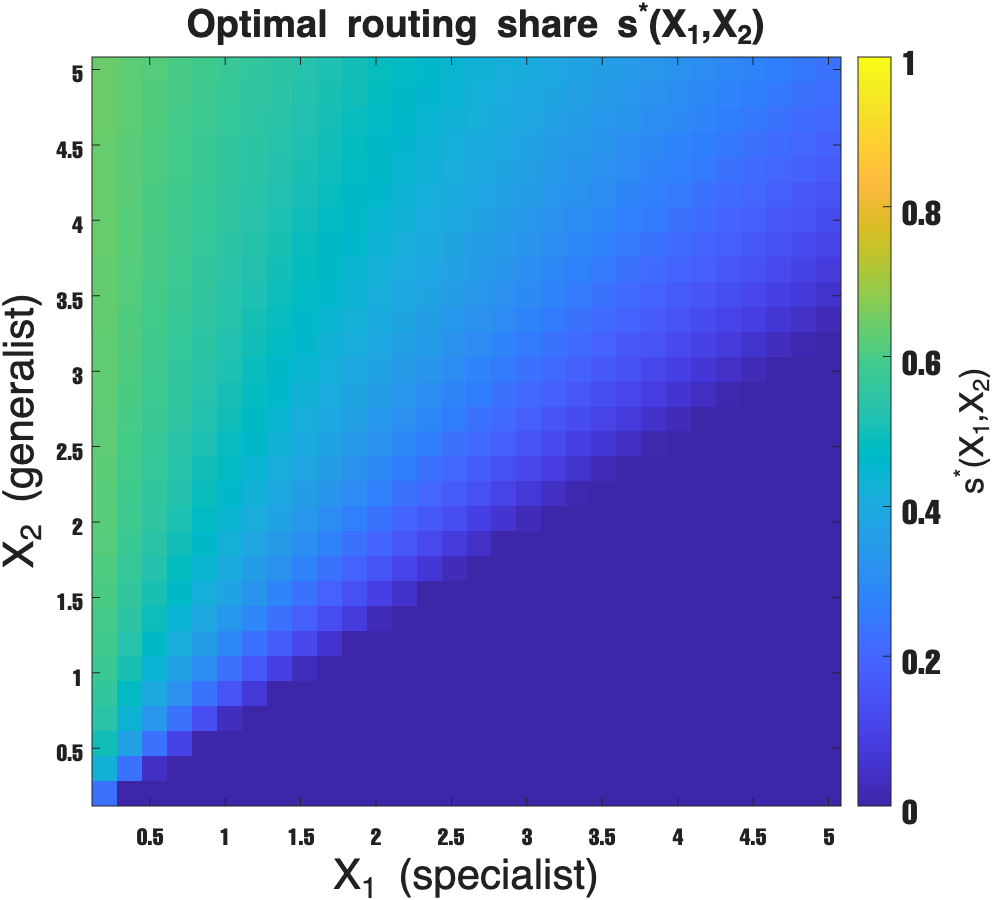}
\caption{Routing share (Networked)}
\end{subfigure}
\hfill
\begin{subfigure}{0.24\linewidth}
\includegraphics[width=\linewidth]{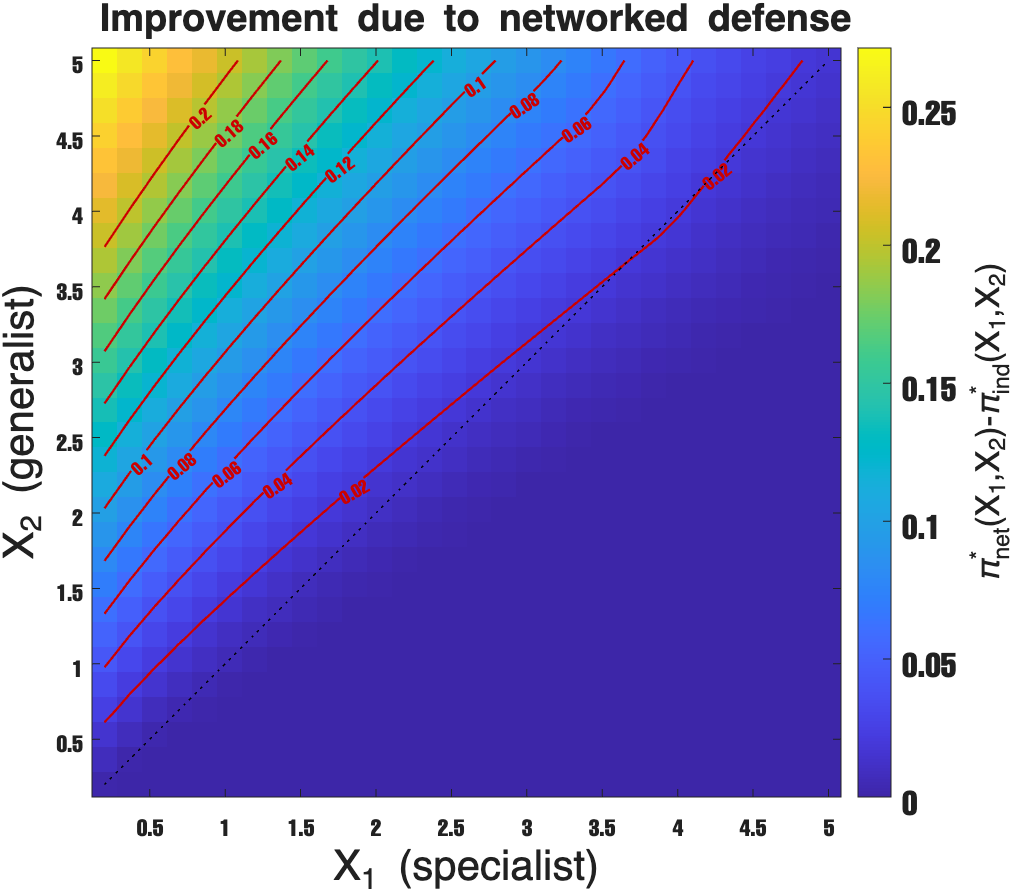}
\caption{Improvement}
\end{subfigure}

\caption{
Comparison between independent and networked defense architectures.
Panels (a) and (b) show the defender equilibrium payoff
$\pi_X^\ast(X_1,X_2)$ under the independent and networked cases,
respectively. Panel (c) shows the optimal routing share
$s^\ast(X_1,X_2)$ in the networked architecture.
Panel (d) displays the resulting performance improvement
$\pi_{\text{net}}^\ast(X_1,X_2)-\pi_{\text{ind}}^\ast(X_1,X_2)$.
}
\label{fig:networked_vs_independent}
\end{figure*}


\section{Equilibrium Value for Two Attack Types}
\label{sec:two-nodes}

In this section, we perform an in-depth analysis of a two-attack-type instance of the NDWL model with two types of defensive resources, i.e., $|\mcal{N}| = |\mcal{M}| = 2$.
The defender has an attack-specific (specialist) resource of size $X_1$ that can only protect against attack type~1,
and a flexible (generalist) resource of size $X_2$ that can be allocated across both attack types.
In this setting, we are able to characterize closed-form expressions for the upper and lower bounds derived in Section~\ref{sec:result}. Moreover, we find that these coincide, yielding an exact equilibrium value of the game.

\begin{theorem}[Exact value for two attack types in the NDWL model]
\label{thm:two_node_exact}
Consider NDWL$(W,\mathbf{X},\mathbf{Y})$, where
\[
W=\begin{bmatrix}1&0\\ w_{21}&w_{22}\end{bmatrix},
\qquad
w_{21},w_{22}>0.
\]
Then the equilibrium value is given by
\begin{equation}
V
=
L(\alpha^\star)
=
\begin{cases}
1-\dfrac{\alpha^\star}{2}, & \alpha^\star\le 1,\\[6pt]
\dfrac{1}{2\alpha^\star}, & \alpha^\star\ge 1,
\end{cases}
\end{equation}
where
\[
\alpha^\star
=
\frac{Y_1}{X_1+s^\star w_{21}X_2}
+
\frac{Y_2}{(1-s^\star)w_{22}X_2}.
\]
\end{theorem}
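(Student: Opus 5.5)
The plan is to show that the two bounds of Theorem~\ref{thm:upper} and Theorem~\ref{thm:lower} coincide, i.e., $g^\star = 1/\alpha^\star$. Since $\mathrm{UB}=L(1/g^\star)$ and $\mathrm{LB}=L(\alpha^\star)$, this gives $\mathrm{UB}=\mathrm{LB}=L(\alpha^\star)$. The sandwich $\mathrm{LB}\le$ max-min $\le$ min-max $\le \mathrm{UB}$ then forces the max-min and min-max values to be equal. That common value is the equilibrium value $V=L(\alpha^\star)$.

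\textbf{Step 1: reduce $\alpha^\star$ to the stated formula.} Because $w_{1,2}=0$, resource type~1 contributes nothing against attack type~2. Moving any mass $s_{1,2}$ onto $s_{1,1}$ therefore increases $z_1$ and leaves $z_2$ unchanged. This can only lower $\alpha(S)$ and only raise $g(p,S)$ for every $p$. So in both optimization problems we may take $s_{1,1}=1$, and write $s_{2,1}=s$, $s_{2,2}=1-s$. This gives
\[
z_1(s)=X_1+s\,w_{21}X_2,\qquad z_2(s)=(1-s)\,w_{22}X_2 .
\]
Then $\alpha(s)=Y_1/z_1(s)+Y_2/z_2(s)$ is strictly convex on $[0,1)$ and tends to $+\infty$ as $s\to1$. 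Hence its minimizer $s^\star\in[0,1)$ exists and is unique. It is given by the first-order condition, or is $s^\star=0$ if the derivative at $0$ is nonnegative. This yields the expression for $\alpha^\star$ in the statement.

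\textbf{Step 2: inner minimization over $p$.} For fixed $S$ with $z_i>0$, Cauchy--Schwarz gives
\[
1=\Big(\sum_i p_i\Big)^2\le \Big(\sum_i p_i^2 z_i/Y_i\Big)\Big(\sum_i Y_i/z_i\Big).
\]
Hence $\min_{p\in\Delta^2} g(p,S)=1/\alpha(S)$, with equality at $p_i\propto Y_i/z_i$. If some $z_i=0$, both sides equal $0$ under the convention $1/\infty=0$, by putting $p_i=1$ on that coordinate. Consequently $\max_S\min_p g(p,S)=1/\alpha^\star$. Weak duality then gives $g^\star=\min_p\max_S g\ge 1/\alpha^\star$.

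\textbf{Step 3 (main obstacle): the reverse inequality $g^\star\le 1/\alpha^\star$.} This is a minimax interchange. My first attempt would be Sion's theorem. The function $g$ is continuous, convex in $p$ and affine (hence concave) in $s$, and both $\Delta^2$ and $[0,1]$ are compact and convex. So $\min_p\max_s g=\max_s\min_p g$.

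To stay self-contained in the two-type case, I would instead exhibit the saddle point $(p^\star,s^\star)$ explicitly, with $p^\star_i\propto Y_i/z_i(s^\star)$. The key check is that $s^\star$ maximizes the affine map
\[
s\mapsto g(p^\star,s)=\frac{p_1^{\star2}}{Y_1}\big(X_1+s w_{21}X_2\big)+\frac{p_2^{\star2}}{Y_2}(1-s)w_{22}X_2 .
\]
Its slope is proportional to $w_{21}Y_1/z_1^2-w_{22}Y_2/z_2^2$, which is exactly $-\alpha'(s^\star)$. In the interior case this slope is $0$, so every $s$ is a maximizer. In the boundary case $s^\star=0$ we have $\alpha'(0)\ge0$, so the slope is $\le0$ and $s=0$ is a maximizer. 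In either case $\max_s g(p^\star,s)=g(p^\star,s^\star)=1/\alpha^\star$. Therefore $g^\star\le 1/\alpha^\star$.

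Combining Steps~2 and~3 gives $g^\star=1/\alpha^\star$. Substituting into $\mathrm{UB}=L(1/g^\star)$ yields $\mathrm{UB}=L(\alpha^\star)=\mathrm{LB}$, which completes the argument.
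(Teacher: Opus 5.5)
Your proposal is correct and follows the paper's proof: reduce to the scalar routing $s$, compute $\min_p g(p,s)=1/\alpha(s)$ at $p_i\propto Y_i/z_i$, and interchange min and max to obtain $g^\star=1/\alpha^\star$, so that $\mathrm{UB}=\mathrm{LB}=L(\alpha^\star)$. Your justification of $s_{1,1}=1$ and your explicit saddle-point check fill in details the paper leaves implicit. The paper simply asserts the scalar reduction and cites the minimax equality, while you show that the slope of $s\mapsto g(p^\star,s)$ equals $-\alpha'(s^\star)/(\alpha^\star)^2$.
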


\begin{proof}
We first calculate the lower bound characterized in Theorem~\ref{thm:lower}. Here, the routing matrix simplifies to a scalar $s \in [0,1]$ that denotes the fraction of the generalist resource (type 2) allocated to attack type~1,
so that $(1-s)$ is allocated to attack type~2.
The induced expected defense levels against each attack type are given by
\begin{equation}
z_1 = X_1 + s\,w_{21}X_2,
\qquad
z_2 = (1-s)\,w_{22}X_2.
\label{eq:two_node_z}
\end{equation}
The aggregate quantity $\alpha(s)$~\eqref{eq:alpha} reduces to
\begin{equation}
\alpha(s)
=
\frac{Y_1}{X_1+s\,w_{21}X_2}
+
\frac{Y_2}{(1-s)\,w_{22}X_2}.
\label{eq:alpha_s}
\end{equation}

The defender chooses $s \in [0,1]$ to minimize $\alpha(s)$. Differentiating~\eqref{eq:alpha_s} gives
\begin{equation}
\alpha'(s)
=
-\frac{Y_1 w_{21}X_2}{(X_1+s\,w_{21}X_2)^2}
+
\frac{Y_2 w_{22}X_2}{((1-s)\,w_{22}X_2)^2}.
\end{equation}
Given the constraint $s\in[0,1]$, the optimal routing parameter is given by
\begin{equation}
\begin{aligned}
s^\star
=
\min\!\Bigg\{
1,\;
\max\!\Bigg\{
0,\;
\frac{
\sqrt{Y_1 w_{21}}\,X_2-\sqrt{Y_2/w_{22}}\,X_1
}{
X_2\Big(\sqrt{Y_1 w_{21}}+w_{21}\sqrt{Y_2/w_{22}}\Big)
}
\Bigg\}
\Bigg\}.
\end{aligned}
\end{equation}
The lower bound from Theorem~\ref{thm:lower} can then be expressed as $L(\alpha^\star)$, where $\alpha^\star := \alpha(s^\star)$.

We next calculate the upper bound characterized in Theorem~\ref{thm:upper}.
Recall the definition of $g(p,S)$ in~\eqref{eq:g_general}, which in this case can be expressed as
\begin{equation}
g(p,s)
=
p_1^2 \frac{z_1}{Y_1}
+
p_2^2 \frac{z_2}{Y_2}.
\label{eq:gps_two}
\end{equation}
Minimizing~\eqref{eq:gps_two} over $p$ for fixed $s$ yields
\begin{equation}
p_1^\star
=
\frac{Y_1/z_1}{Y_1/z_1+Y_2/z_2},
\qquad
p_2^\star
=
\frac{Y_2/z_2}{Y_1/z_1+Y_2/z_2},
\end{equation}
where $z_1$ and $z_2$ are given by~\eqref{eq:two_node_z}. Substituting these probabilities into~\eqref{eq:gps_two} gives
\begin{equation}
g(p^\star,s)
=
\frac{1}{\frac{Y_1}{z_1}+\frac{Y_2}{z_2}}
=
\frac{1}{\alpha(s)}.
\label{eq:g_alpha_relation}
\end{equation}

Since $g(p,s)$ is convex in $p$ and affine in $s$, the minimax equality applies, so
\begin{equation}
\min_{p\in\Delta^2}\max_{s\in[0,1]}g(p,s)
=
\max_{s\in[0,1]}\min_{p\in\Delta^2}g(p,s).
\end{equation}
Therefore, maximizing $g(p^\star,s)$ over $s$ is equivalent to minimizing $\alpha(s)$, and thus we obtain
\begin{equation}
g^\star
=
\max_{s\in[0,1]} g(p^\star,s)
=
\frac{1}{\alpha^\star}.
\end{equation}
Substituting into the optimized upper bound yields
\begin{equation}
\mathrm{UB}
=
L\!\left(\frac{1}{g^\star}\right)
=
L(\alpha^\star)
=
\mathrm{LB}.
\end{equation}
We see that the upper and lower bounds coincide, and thus we obtain the equilibrium value of the game.
\end{proof}

\subsection*{Numerical illustration for two attack types}

We illustrate this result by comparing the proposed NDWL architecture with an independent-defense benchmark from the existing literature \cite{paarporn2025allocationheterogeneousresourcesgeneral}, in which the weight matrix $W$ is diagonal. The attacker budgets are fixed at
$Y_1=3$ and $Y_2=1$,
so attack type~1 receives the larger attack budget. The defender resources $(X_1,X_2)$ are varied, where $X_1$ is a specialist resource and $X_2$ is a generalist resource.

In the independent benchmark (see, e.g.,~\cite{paarporn2025allocationheterogeneousresourcesgeneral}), the generalist resource protects only against attack type~2, corresponding to
$w_{21}=0$ and $w_{22}=1$,
whereas in the NDWL model the generalist resource can also contribute to protection against attack type~1, with
$w_{21}=0.8$ and $w_{22}=1$.

Figure~\ref{fig:networked_vs_independent} illustrates the impact of the networked architecture. The equilibrium payoff under the independent benchmark is plotted in panel (a). The analogous payoff under NDWL is plotted in panel (b). Panel (c) plots the resulting optimal routing share $s^\ast(X_1,X_2)$. Panel (d) plots the resulting performance improvement
\[
\pi_{\mathrm{NDWL}}^\ast(X_1,X_2)
-
\pi_{\mathrm{ind}}^\ast(X_1,X_2).
\]

These results demonstrate the primary advantage of the NDWL architecture: flexibility in routing resources allows the defender to allocate more protection effort toward the more heavily attacked type, leading to strictly improved performance. The largest improvement occurs when specialist resources are limited and the defender must utilize the generalist resource, highlighting the value of flexible routing to reallocate protection toward the attack type receiving the larger attack budget.

\section{CONCLUSIONS}

The objective of this paper was to analyze the NDWL model. We derived a complete characterization of the equilibrium value when there are two attack types. In particular, we proved that the optimized upper and lower bounds coincide and have a closed-form expression which reveals an intuitive relationship between the optimal routing decision and the aggregate vulnerability of the system. Our analysis elucidates the structural importance of networked resource allocation: flexibility in routing generalist resources across attack types allows the defender to adapt to asymmetric attack budgets and can strictly outperform the independent benchmark. For general $n$, we leave open the problem of proving analytically that our upper and lower bounds are equal. However, our numerical experiments suggest that the gap between the bounds remains small across the tested parameter range; this provides strong evidence that the tight characterization available in the two-attack-type case may hold more generally. These results highlight that the NDWL framework uncovers a fundamental and tractable structure underlying multi-attack-type defense problems, and suggest that there may be important lessons to be learned by analyzing networked resource allocation in more general settings.


\bibliographystyle{ieeetr}

\end{document}